\def\BibTeX{{\rm B\kern-.05em{\sc i\kern-.025em b}\kern-.08em
		T\kern-.1667em\lower.7ex\hbox{E}\kern-.125emX}}
\DeclareMathAlphabet{\mathppl}{T1}{ppl}{m}{it}
\DeclareMathAlphabet{\mathphv}{T1}{phv}{m}{it}
\DeclareMathAlphabet{\mathpzc}{T1}{pzc}{m}{it}
\DeclareMathOperator{\re}{Re}
\newcommand{\Conj}[1]{{#1}^*}
\newcommand{\Herm}[1]{{#1}^{\mathrm{H}}}
\newcommand{\Mt}[1]{\mathbf{#1}}
\newcommand{\Real}[1]{\re\left\{#1\right\}}
\newcommand{\Set}[1]{\mathcal{\uppercase{#1}}}
\newcommand{\Transp}[1]{{#1}^{\mathrm{T}}}
\newcommand{\Vt}[1]{\mathbf{\lowercase{#1}}}
\newcommand{\mtC}{\Mt{C}}
\newcommand{\mtD}{\Mt{D}}
\newcommand{\mtF}{\Mt{F}}
\newcommand{\mtZero}{\Mt{0}}
\newcommand{\vtB}{\Vt{B}}
\newcommand{\vtN}{\Vt{N}}
\newcommand{\vtY}{\Vt{Y}}
\newcommand{\vtZero}{\Vt{0}}
\newcommand{\stC}{\Set{C}}
\newcommand{\stK}{\Set{K}}
\newcommand{\stM}{\Set{M}}
\newcommand{\stN}{\Set{N}}
\newcommand{\stO}{\Set{O}}
\newcommand{\stP}{\Set{P}}
\newcommand{\bbC}{\mathbb{C}}
\newcommand{\bbE}{\mathbb{E}}
\newlist{nomenclature}{description}{1}
\setlist[nomenclature]{labelwidth=\widthof{\textbf{\small $\vtZero_{U}, \mtZero_{U \times B}$}}, labelindent=0pt, labelsep=1em, leftmargin=!, align=myright, font=\normalfont}
\newtheorem{prop}{Proposition}
\begin{document}

\title{Template for IEEE TVT \LaTeX\ Submission}
\title{Efficient Battery Usage in Wireless-Powered Cell-Free Systems with Self-Energy Recycling}

\author{
	Iran M. Braga Jr., Roberto P. Antonioli, G\'abor Fodor, \\ Yuri C. B. Silva and Walter C. Freitas Jr.
	\thanks{This work was supported in part by Ericsson Research, Technical Cooperation Contract UFC.48, in part by the Brazilian National Council for Scientific and Technological Development (CNPq), in part by FUNCAP, and in part by CAPES/PRINT Grant 88887.311965/2018-00.
	}
	\thanks{Iran M. Braga Jr., Roberto P. Antonioli, Yuri C. B. Silva and Walter C. Freitas Jr. are with the  Wireless Telecom Research Group (GTEL),
		Federal University of Cear\'a, Fortaleza 60455-760, Brazil. (e-mails: \{iran, antonioli, yuri, walter\}@gtel.ufc.br.). Roberto P. Antonioli is also with Instituto Atlântico, Fortaleza 60811-341, Brazil (e-mail: roberto\_antonioli@atlantico.com.br).}
	\thanks{G\'abor Fodor is with Ericsson Research, 16480 Stockholm, Sweden, and also with the Division of Decision and Control, KTH Royal Institute of Technology, 11428 Stockholm, Sweden (e-mail: gabor.fodor@ericsson.com)}
	\thanks{Copyright (c) 2015 IEEE. Personal use of this material is permitted. However, permission to use this material for any other purposes must be obtained from the IEEE by sending a request to pubs-permissions@ieee.org.}% <-this % stops a space
}

\markboth{IEEE Transactions on Vehicular Technology,~Vol.~XX, No.~XX, XXX~2015}
{}
%{Shell \MakeLowercase{\textit{et al.}}: Bare Demo of IEEEtran.cls for Journals}

\maketitle

\begin{abstract}
	This paper investigates wireless-powered cell-free {systems, in which} 
	the users send their uplink data signal while simultaneously harvesting energy from {network nodes and user terminals} 
	-- including {the transmitting user terminal itself} -- by performing self-energy recycling.
	In this {rather general setting}, a closed-form lower bound of the amount of harvested energy and 
	{the achieved signal-to-interference-plus-noise ratio} expressions are derived.
	Then, to improve the energy efficiency, we formulate the problem of minimizing the users' battery energy usage while satisfying minimum data rate requirements.
	Due to the non-convexity of the problem, a novel alternating optimization algorithm is proposed, and its proof of convergence is provided.
	Finally, numerical results show that the proposed method is more efficient than a state-of-art algorithm in terms of battery energy usage and outage rate.
\end{abstract}

\begin{IEEEkeywords}
	Cell-free, EE, self-energy recycling, SWIPT.
\end{IEEEkeywords}

\section{Introduction}
% Motivation
\ac{B5G} and \ac{6G} networks {are expected to} deal with a large number of devices, which are battery-limited. Thus, energy management plays a crucial role {in the current and next} generations of mobile networks.
In this context, 
\ac{WPT} has garnered several efforts in the industry and academia, and is considered a promising technology for \ac{B5G} and \ac{6G} networks.
Indeed, \ac{WPT} allows low-power devices, including rechargeable batteries and energy-constrained wireless sensors, 
to power up their battery by means of radio frequency signals.
In terms of enabling architectures, {\ac{CF} systems  offer} higher coverage probabilities 
than cell-based networks, {and provide surface-uniform signal strengths,} which reduces the impact of path losses experienced 
in wireless scenarios.  
{These advantageous features of \ac{CF} systems improve} the energy harvesting opportunities of cell-edge users \cite{Demir2021a}.
Specifically, the combination of \ac{SWIPT} and \ac{CF} systems is promising, 
since \ac{SWIPT} is attractive in terms of enabling to carry information and energy simultaneously \cite{Zhang2022}.
%{\it GF: A specific reference that refers to this "combination" would be nice.}

\ac{SWIPT} in \ac{CF} systems has been investigated in some works from the literature using the \ac{TS} protocol, such as in \cite{Demir2021a, Femenias2021, Zhang2022}.%, Zheng2021}.
Although the \ac{TS} protocol has low hardware complexity, it {suffers from some performance loss} in terms of \ac{SE}, 
because less resources are dedicated to information transmission.
In \cite{Galappaththige2021}, the \ac{PS} protocol was studied, which splits the received signal into {an energy-carrying} signal and information-detection signal.
In general, the \ac{PS} protocol {achieves higher} performance than the \ac{TS} protocol. %,since the \ac{TS} protocol can be treated as a special case of \ac{PS} protocol with binary split power ratios.
However, the \ac{PS} protocol receiver needs a radio frequency signal splitter, {which increases the hardware complexity}.
Hence, the \ac{PS} protocol is {often not viable} when dealing with constrained-size devices~\cite{Na2018}. 
%{\it GF: A reference here would be nice, maybe just re-citing one of the previous references.}

Also, \ac{SWIPT} and \ac{FD} can be combined to obtain advantages in terms of spectral and energy efficiency.
In contrast to conventional \ac{FD} systems, in which \ac{SI} is harmful, \ac{SI} can be beneficial 
in terms of energy source for harvesting energy by exploiting self-energy recycling. 
{Indeed, self-energy recycling can provide an extra energy source} for users 
in addition to the energy sent from the \acp{AP}. {For instance, in \cite{Nguyen2022} the authors proposed relay selection algorithms with self-energy recycling to improve outage probability in multi-relay \ac{FD} networks with non-identically distributed fading channels.
The authors in~\cite{Kwon2020} and~\cite{Shaikh2020} also studied self-energy recycling in the context of multi-relay \ac{FD} networks. However, 
as} far as we know, this is the first paper analyzing wireless-powered \ac{CF} systems 
with self-energy recycling. 

Specifically, the main contributions of this work are:
\begin{itemize}
	\item
	We derive a closed-form lower bound of the amount of harvested energy and the \ac{SINR} expressions 
	in wireless-powered \ac{CF} systems with self-energy recycling. 
	Differently from the \ac{TS} and \ac{PS} protocols, in the proposed method more resources are available 
	for data transmission and more energy is harvested due to self-energy recycling with low hardware complexity at the user-side;
	\item
	We investigate the problem of minimizing the users' battery energy usage with minimum data rate demands;
	\item
	Due to the non-convexity of the formulated optimization problem, 
	we propose a novel alternating optimization algorithm and provide its proof of convergence;
	and 
	\item
	{We analyze the performance} using simulations and compare the proposed solution with another state-of-art algorithm, showing that the proposed method is more efficient in terms of battery energy usage and outage rate.
\end{itemize}

%%%%%%%%%%%%%%%%%%%%%%%%%%%%%%%%%%%%%%%%%%%%%%%%%%%%%%%%%%%%%%%%%%%%%%%%%%%%%%%%%%%%%%%%%%%%%%%%%%%%%%%%%
{\textit{Notation:} % The following notation is used throughout this paper. %
We use uppercase and lowercase boldface to denote matrices and vectors, respectively. %
Plain letters are used for scalars. %
{$\Transp{(\cdot)}$, $\Herm{(\cdot)}$} and $\Conj{(\cdot)}$ denote the transpose, conjugate transpose and conjugate operations, respectively.
$\{x_i\}_{\forall i}$ denotes the set of elements $x_i$ for the values of $i$ denoted by the subscript expression.
$\mathbf{I}$ is the identity matrix.
$\Real{\cdot}$ takes the real value of a complex number.
$|\cdot|$ takes the absolute value of a complex number.
{The} expected value of a random variable is denoted by $\mathbb{E}[\cdot]$.
Other notational conventions are summarized in Table~\ref{TABLE:Notation}}: 

\begin{table}[t]
	\centering
	\caption{{Notation conventions.}}
	\label{TABLE:Notation}
	{\begin{tabular}{lll}
		\toprule
		\textbf{Notation} & \textbf{Description} \\
		\midrule
		$x_{m,k}$ & Variable related to the link between \ac{AP} $m$ and user $k$ \\
		\midrule
		$\ddot{x}_{k, j}$ & Variable related to the link between user $k$ and user $j$ \\
		\midrule
		$\dddot{x}_{m,q}$ & Variable related to the link between \ac{AP} $m$ and user $q$ \\
		\midrule
		$\bar{x}$ & Variable related to the \ac{LOS} component\\
		\midrule
		$x^{\text{p}}$ & Variable related to the pilot transmission\\	
		\midrule
		$x^{(\text{d})}$ & Variable related to the downlink transmission\\		
		\midrule
		$x^{(\text{u})}$ & Variable related to the uplink transmission\\		
		\bottomrule
	\end{tabular}}
\end{table}

\section{System Model}
\label{SEC:SYSTEM_MODEL}
We consider a \ac{CF} system consisting of  $M$ single-antenna \acp{AP} and $K$ single-antenna users.
We define $\stM$ and $\stK$ as the sets of \acp{AP} and users, respectively.
Furthermore, all \acp{AP} are connected via fronthaul links to a \ac{CPU}.
Each coherence interval is divided into two phases.
In the first phase, all users send their pilot sequences to the \acp{AP}, which estimate the channels to design precoding vectors for effective energy transfer and data reception in the second phase.
In the second phase, all users send their data signal while harvesting energy from the \acp{AP} and interfering users.
Let $\tau_{\text{c}}$ be the total number of samples per coherence interval.
Then, we have that $\tau_{\text{c}} = \tau_{\text{p}} + \tau_{\text{u}}$, where $\tau_{\text{p}}$ and $\tau_{\text{u}}$ are the length of the first and second phases, respectively.
Therefore, differently from the \ac{TS} and \ac{PS} protocols, in the proposed method more resources are available for energy harvesting and data transmission as they are both executed in the second phase,
%since there is not dedicated resources for energy harvesting,
thus more energy is harvested due to self-energy recycling. 
Also, no \ac{SI} canceling is {needed} at the {users as they do not need to decode the data signal while transmitting data on the uplink}, which reduces the hardware complexity at the user-side.
%{\it GF: No SI cancellation is performed or no SI cancellation is needed ? It is not clear at this point if only the AP is full-duplex or also the user terminals.}

{We denote by} $g_{m,k}\in\bbC$ the channel vector between user~$k$ and \ac{AP} $m$, 
by $\ddot{g}_{k, j}\in\bbC$ the \ac{IUI} channel between user $k$ and user $j$, 
and by $\dddot{g}_{m,q}\in\bbC$ the channel between \ac{AP} $m$ and \ac{AP} $q$. 
Note that $\dddot{g}_{m,m}$ is the \ac{SI} channel at \ac{AP} $m$ while $\dddot{g}_{m,q}$ for $m\neq q$ is referred to as the \ac{IAI} 
channel, since uplink signals received at \ac{AP} $m$ are corrupted by downlink signals sent from \ac{AP} $q$. 
Similarly, $\ddot{g}_{k,k}$ is the \ac{SI} channel at user $k$.
All channels are considered to be constant in each time-frequency
coherence interval and they are \ac{i.i.d} in different coherence intervals. In addition, all channels are modeled as spatially uncorrelated Rician fading channels with unknown phase shifts, i.e., each channel realization
can be expressed as
\begin{align}
	g_{m,k} &= \bar{h}_{m,k}e^{j\theta_{m,k}} + h_{m,k} \, ,\\
	\ddot{g}_{k,j} &= \ddot{\bar{h}}_{m,k}e^{j\theta_{k,j}} + {\ddot{h}}_{k,j} \, ,\\
	\dddot{g}_{m,q} &= \dddot{\bar{h}}_{m,q}e^{j\theta_{m,q}} + {\dddot{h}}_{m,q} \, ,
\end{align}
where $\bar{h}_{m,k}e^{j\theta_{m,k}}$, $\ddot{\bar{h}}_{k,j}e^{j\theta_{k,j}}$ and $\dddot{\bar{h}}_{m,q}e^{j\theta_{m,q}}$ denote the \ac{LOS} components and $h_{m,k}$, ${\ddot{h}}_{k,j}$ and ${\dddot{h}}_{m,q}$ correspond to the \ac{NLOS} small-scale fading with $h_{m,k}\sim\stC\stN({0},\beta_{m,k})$, ${\ddot{h}}_{k,j}\sim\stC\stN(0,\ddot{\beta}_{k,j})$ and ${\dddot{h}}_{m,q}\sim\stC\stN(0,\dddot{\beta}_{m,q})$, where $\beta_{m,k}$, $\ddot{\beta}_{k,j}$ and $\dddot{\beta}_{m,q}$ are the large-scale fading coefficients which account for path-loss and shadowing.
We assume that only $\left\{\bar{h}_{m,k}, \ddot{\bar{h}}_{k,j}, \dddot{\bar{h}}_{m,q}, \beta_{m,k}, \ddot{\beta}_{k,j}, \dddot{\beta}_{m,q}\right\}$ are known at the \ac{CPU}, such as in~\cite{Ozdogan2019, Demir2021a}.
Indeed, we consider a more realistic scenario where the phase shifts $\theta$ in the \ac{LOS} 
components are unknown, and assume that they are uniformly distributed in the interval $\left[0, 2\pi\right)$ \cite{Ozdogan2019, Demir2021a}.
Finally, we have that
\begin{align}
	w_{m,k} &\triangleq \bbE\left[g_{m,k}\Conj{g}_{m,k}\right] = \bar{h}_{m,k}\Conj{\bar{h}}_{m,k} + \beta_{m,k} \, , \\
	\ddot{w}_{k,j} &\triangleq\bbE\left[\ddot{g}_{k,j}\Conj{\ddot{g}}_{k,j}\right] = \ddot{\bar{h}}_{k,j}\Conj{\ddot{\bar{h}}}_{k,j} + \ddot{\beta}_{k,j} \, , \\
	\dddot{w}_{m,q} &\triangleq \bbE\left[\dddot{g}_{m,q}\Conj{\dddot{g}}_{m,q}\right] = \dddot{\bar{h}}_{m,q}\Conj{\dddot{\bar{h}}}_{m,q} + \dddot{\beta}_{m,q} \, .
\end{align}

%is {composed by} %modeled as the combination of
%the large-scale fading $\beta_{m,k}$ and the small-scale fading vector $\vtH_{m,k} \in \bbC^{N\times 1}$: $\vtG_{m,k} = \sqrt{\beta_{m,k}}\vtH_{m,k}$.
%Note that $\vtH_{m,k}\sim\stC\stN(0, \mtI_{N}),\ \forall(m,k)$ are \acl{i.i.d.} random variables~\cite{Ngo2017}.
%We assume that only the large scale fading coefficients, $\{\beta_{m,k}\}_{\forall(m,k)}$,
%are known at the \ac{CPU}, as they vary slowly and can be easily estimated. Indeed, the large scale coefficients are kept constant during $T$ coherence intervals or time slots.

\subsection{Channel Estimation in the First Phase}
We assume that $\tau_{\text{p}}$ mutually orthogonal pilot sequences {$\sqrt{\tau_{\text{p}}}\bm{\varphi} \in\bbC^{\tau_{\text{p}}\times 1}$ are used for channel estimation with $\lVert\bm{\varphi}\rVert^{2}~=~1$}. Thus, let $\sqrt{\tau_{\text{p}}\rho_{\text{p}}}\bm{\varphi}_{k}$ be the pilot sequence assigned to user $k$, for $k = 1,\ \ldots,\ K$, where $\rho_{\text{p}}$ is the transmit power of the pilot symbol, and $\stP_k$ the subset of users which are
assigned the same pilot sequence as user $k$, including itself. Then, the received pilot signal vector $\vtY_{m}^{\text{p}}\in\bbC^{1\times \tau_{\text{p}}}$ at the \ac{AP} $m$, in the first phase, is given by
\begin{align}
	\vtY_{m}^{\text{p}} &= \sum_{k=1}^{K} \sqrt{\tau_{\text{p}} \rho_{\text{p}}}g_{m,k}\Herm{\bm{\varphi}}_{k}  + \vtN_{m}^{\text{p}},
	\label{EQ:PILOT_SIGNAL_FIRST_PHASE}
\end{align}
where $\vtN_{m}^{\text{p}}\in\bbC^{1\times \tau_{\text{p}}}$
is the receiver noise with independent $\stC\stN(0,\sigma^2)$ entries, in which $\sigma^2$ is the noise power.

After obtaining the projection of $\vtY_{m}^{\text{p}}$ onto $\bm{\varphi}_{k}$,
given by 
$\check{y}_{m,k}^{\text{p}} = \vtY_{m}^{\text{p}} \bm{\varphi}_{k}$,
the phase-unaware \ac{LMMSE} estimate of $\{g_{m,k}\}_{\forall(m,k)}$ is~\cite{Ozdogan2019}
\begin{align}
	\hat{g}_{m,k} &= \sqrt{\tau_{\text{p}} \rho_{\text{p}}}w_{m,k}\psi_{m,k}^{-1}\check{y}_{m,k}^{\text{p}}, \label{EQ:ESTIMATED_CHANNEL}
\end{align}
where $\psi_{m,k} = \sum\limits_{j\in\stP_{k}}\tau_{\text{p}}\rho_{\text{p}}w_{m,j} + \sigma^{2}.$
%{\it GF: In the above formula, I would prefer to take the conjugate instead of Hermitian, since we are dealing with scalars.}
%{\it GF: We should check if $\mathds{E}(\check{y}_{m,k}^{\text{p}})=0$, otherwise the above formula is not correct, please
%see: \url{https://en.wikipedia.org/wiki/Complex\_random\_variable}, eq. (3).}
Note that the estimated channel $\hat{g}_{m,k}$ and the channel estimation error $\tilde{g}_{m.k} = g_{m,k} - \hat{g}_{m,k}$ are independent {variables} 
distributed as $\hat{g}_{m,k}\sim\stC\stN(0,\gamma_{m,k})$ and $\tilde{g}_{m,k}\sim\stC\stN(0,c_{m,k})$, where $\gamma_{m,k} = \tau_{\text{p}}\rho_{\text{p}}w_{m,k}^{2}\psi_{m,k}^{-1}$ and $c_{m,k} = w_{m,k} - \gamma_{m,k}$.

\subsection{Wireless Power Transfer in the Second Phase}
We consider a non-coherent energy harvesting, thus, the signal transmitted by \ac{AP} $m$ to user $k$ is given by
\begin{align}
	s_{m}^{\text{(d)}} = \sum_{k=1}^{K}\sqrt{p_{m,k}^{\text{(d)}}}\hat{g}_{m,k}s_{m,k},
\end{align}
where $s_{q,k}\in\bbC^{\tau_{\text{p}}\times 1}$ is an energy signal transmitted from
the $q$-th \ac{AP} to the $k$-th user, which are \ac{i.i.d} with $|s_{q,k}|^2 = 1$, and $p_{q,k}^{(\text{d})}$ is the power used by the $q$-th \ac{AP} to transmit to the $k$-th user.
The transmission power for each \ac{AP} should satisfy the maximum power limit which is $P_{m}^{\max}$ in the long-term, i.e.,
\begin{align}
	\bbE\left[\left|s_{m}^{\text{(d)}}\right|^2\right] = \sum_{k=1}^{K}p_{m,k}^{\text{(d)}}\gamma_{m,k}  \leq P_{m}^{\max}.
\end{align}
The received signal at user $k$ in the second phase is
\begin{align}
	z^{\text{(d)}}_{k} &= \sum_{m=1}^{M}\sum_{j=1}^{K}\sqrt{p_{m,j}^{\text{(d)}}}\Conj{g}_{m,k}\hat{g}_{m,j}s_{q,j} + \sum_{j=1}^{K}\sqrt{\eta_{j}}\ddot{g}_{k,j}x_{j} + n_{k}^{\text{E}},
\end{align}
where $n_{k}^{\text{E}}$ is the noise at the user $k$, which is neglected because the noise floor is too low for energy harvesting \cite{Demir2021a}.
{The second term is related to the signals coming from other users in the system. This occurs because, differently from~\cite{Demir2021a}, we also transmit data while harvesting energy at the user side.
Observe also that this term includes the user $k$ itself, which is the term related to the self-energy recycling that increases the harvested energy.}
The $\eta_j$ and $x_j$ terms are related to the uplink transmission and are defined in the next subsection.

{In this work, we consider a linear energy harvesting model. While such a model might be optimistic for practical applications, it allows us to obtain initial insights about cell-free systems with self-energy recycling. In fact, as shown~\cite{Demir2021a}, it is expected that the performance of cell-free systems when considering a non-linear energy harvesting model is lower if compared to a linear model.
	However, the study of cell-free systems with self-energy recycling and a non-linear energy harvesting model is left for future works.
	Therefore, assuming} that $\mu\in [0, 1]$ is the energy harvesting efficiency of the rectifier circuit \cite{Demir2021a},
% Thus,
the average harvested energy at user $k$ during the second phase is
\begin{align}
	E_{k}^{\text{(d)}} &= \tau_{\text{d}}\mu\bbE\left[\left|\sum_{m=1}^{M}\sum_{j=1}^{K}\sqrt{p_{m,j}^{\text{(d)}}} \Conj{g}_{m,k}\hat{g}_{m,j}s_{m,j}\right.\right.\nonumber\\
	&\hspace{20ex}\left.\left. +  \sum_{j=1}^{K}\sqrt{\eta_{j}}\ddot{g}_{k,j}x_j\right|^2\right] \, ,
\end{align}
{where the second term is related to the signals coming from other users in the system.}

\begin{prop}
	Assuming the phase-unaware \ac{LMMSE} channel estimator in \eqref{EQ:ESTIMATED_CHANNEL}, the average energy harvested for non-coherent downlink transmission is given in \eqref{EQ:AVERAGE_ENERGY_HARVESTED_SECOND_PHASE} at the top of the next page.
\end{prop}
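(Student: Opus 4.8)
The plan is to expand the squared modulus inside the expectation and to isolate the two physically distinct contributions. Write the bracketed quantity as $A+B$, where $A=\sum_{m=1}^{M}\sum_{j=1}^{K}\sqrt{p_{m,j}^{\text{(d)}}}\,\Conj{g}_{m,k}\hat{g}_{m,j}s_{m,j}$ gathers the energy signals radiated by the \acp{AP} and $B=\sum_{j=1}^{K}\sqrt{\eta_{j}}\,\ddot{g}_{k,j}x_{j}$ gathers the uplink data signals reaching user $k$, the $j=k$ term being the self-energy recycling contribution through $\ddot{g}_{k,k}$. Then $\bbE[|A+B|^2]=\bbE[|A|^2]+\bbE[|B|^2]+2\Real{\bbE[A\Conj{B}]}$. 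Since the energy symbols $\{s_{m,j}\}$ and the data symbols $\{x_j\}$ are mutually independent, zero-mean, unit-power, and independent of all channels, $\bbE[A\Conj{B}]$ factorizes into a product containing $\bbE[s_{m,j}]$ or $\bbE[\Conj{x}_{j}]$ and therefore vanishes, so the two contributions decouple.

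The data term is the simpler one. Using $\bbE[x_j\Conj{x}_{j'}]=\delta_{j,j'}$ and the fact that the \ac{IUI} channels $\ddot{g}_{k,j}$ are statistically independent of the transmitted data, only the diagonal pairs survive, yielding $\bbE[|B|^2]=\sum_{j=1}^{K}\eta_{j}\bbE[|\ddot{g}_{k,j}|^2]=\sum_{j=1}^{K}\eta_{j}\ddot{w}_{k,j}$, in which the self-recycling gain $\eta_{k}\ddot{w}_{k,k}$ appears explicitly. No higher-order moments arise here because the \ac{IUI} channels are not estimated and hence not correlated with the signals.

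The energy term requires the channel statistics. The independence of the energy symbols across $(m,j)$ again removes every off-diagonal pair, so $\bbE[|A|^2]=\sum_{m=1}^{M}\sum_{j=1}^{K}p_{m,j}^{\text{(d)}}\,\bbE[|g_{m,k}|^2|\hat{g}_{m,j}|^2]$. I would evaluate each mixed moment by cases. When user $j$ does not share a pilot with user $k$, the channel $g_{m,k}$ and the estimate $\hat{g}_{m,j}$ are independent and the moment factorizes as $w_{m,k}\gamma_{m,j}$. When $j$ shares the pilot of $k$ (in particular $j=k$), the estimate is correlated with $g_{m,k}$ through pilot contamination; here I would use the orthogonal decomposition $g_{m,k}=\hat{g}_{m,k}+\tilde{g}_{m,k}$ with independent, zero-mean $\hat{g}_{m,k}$ and $\tilde{g}_{m,k}$, the cross-correlation $\bbE[g_{m,k}\Conj{\hat{g}}_{m,j}]=\tau_{\text{p}}\rho_{\text{p}}w_{m,k}w_{m,j}\psi_{m,j}^{-1}$ read off directly from \eqref{EQ:ESTIMATED_CHANNEL}, and the circular symmetry induced by the uniformly distributed \ac{LOS} phases (which makes all pseudo-covariances vanish) to reduce the computation to second- and fourth-order moments. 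Collecting the three cases and inserting them, together with $\bbE[|B|^2]$, into $E_k^{\text{(d)}}=\tau_{\text{d}}\mu(\bbE[|A|^2]+\bbE[|B|^2])$ yields the claimed closed form.

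The main obstacle is the fourth-order moment $\bbE[|g_{m,k}|^2|\hat{g}_{m,j}|^2]$ for the pilot-sharing and self terms. Because the \ac{LOS} components are constant-modulus random-phase rather than Gaussian, $g_{m,k}$ and $\hat{g}_{m,j}$ are only circularly symmetric and not jointly Gaussian, so the convenient identity $\bbE[|X|^2|Y|^2]=\bbE[|X|^2]\bbE[|Y|^2]+|\bbE[X\Conj{Y}]|^2$ does not hold exactly and leaves \ac{LOS}-dependent correction terms. I expect the bulk of the work, and the point where either a careful direct moment expansion or a controlled simplification yielding the stated bound is needed, to lie precisely in handling these \ac{LOS} phase terms; the remaining algebra is routine given the per-link statistics established in the system-model section.
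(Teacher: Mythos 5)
Your overall route is the correct one, and it is essentially the route the paper itself takes: the paper's own ``proof'' is merely a pointer to the moment-expansion technique of Demir et al., and your steps -- splitting the received signal into the \ac{AP} term $A$ and the user term $B$, killing the cross term $\bbE[A\Conj{B}]$ via zero-mean independent symbols, reducing $\bbE[|A|^2]$ to diagonal pairs $\sum_{m,j}p_{m,j}^{\text{(d)}}\bbE\bigl[|g_{m,k}|^2|\hat{g}_{m,j}|^2\bigr]$, computing $\bbE[|B|^2]=\sum_{j}\eta_{j}\ddot{w}_{k,j}$, and splitting cases on $j\in\stP_k$ -- are exactly those steps. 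The two soft spots are both in your treatment of the pilot-sharing fourth moment. First, avoid the decomposition $g_{m,k}=\hat{g}_{m,k}+\tilde{g}_{m,k}$ with an \emph{independence} assumption: for the phase-unaware \ac{LMMSE} estimator and these non-Gaussian (random-phase Rician) channels, estimate and error are uncorrelated but not independent in general, so that route requires justification you do not supply. Second, your hedge between ``a careful direct moment expansion'' and ``a controlled simplification yielding the stated bound'' should be resolved in favor of the former: the expansion closes \emph{with equality}, and no bounding step is needed.

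The concrete fix is short. For $j\in\stP_k$, substitute \eqref{EQ:ESTIMATED_CHANNEL} directly to write
\begin{equation*}
\hat{g}_{m,j}=\sqrt{\tau_{\text{p}}\rho_{\text{p}}}\,w_{m,j}\psi_{m,j}^{-1}\left(\sqrt{\tau_{\text{p}}\rho_{\text{p}}}\,g_{m,k}+u\right),
\end{equation*}
where $u$ (pilot contributions of the other users in $\stP_k$ plus projected noise) is zero mean, independent of $g_{m,k}$, with $\bbE[|u|^2]=\psi_{m,j}-\tau_{\text{p}}\rho_{\text{p}}w_{m,k}$. The non-Gaussianity you worry about is harmless because, conditioned on the phase, $g_{m,k}\sim\stC\stN\bigl(\bar{h}_{m,k}e^{j\theta_{m,k}},\beta_{m,k}\bigr)$ has a mean of \emph{constant modulus}, so the conditional fourth moment does not depend on $\theta_{m,k}$:
\begin{equation*}
\bbE\bigl[|g_{m,k}|^4\bigr]=|\bar{h}_{m,k}|^4+4|\bar{h}_{m,k}|^2\beta_{m,k}+2\beta_{m,k}^2
= w_{m,k}^2+\beta_{m,k}\bigl(2|\bar{h}_{m,k}|^2+\beta_{m,k}\bigr).
\end{equation*}
Combining the two displays gives
\begin{equation*}
\bbE\bigl[|g_{m,k}|^2|\hat{g}_{m,j}|^2\bigr]
=\gamma_{m,j}w_{m,k}
+\left(\tau_{\text{p}}\rho_{\text{p}}\right)^2 w_{m,j}^2\psi_{m,j}^{-2}\,\beta_{m,k}\bigl(2|\bar{h}_{m,k}|^2+\beta_{m,k}\bigr),
\end{equation*}
and since $\Real{\Conj{\bar{h}}_{m,k}\psi_{m,j}^{-1}w_{m,j}\bar{h}_{m,k}w_{m,j}\psi_{m,j}^{-1}}=|\bar{h}_{m,k}|^2w_{m,j}^2\psi_{m,j}^{-2}$, the second term is precisely the pilot-contamination correction in \eqref{EQ:AVERAGE_ENERGY_HARVESTED_SECOND_PHASE}. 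Summing over $(m,j)$ and adding $\bbE[|B|^2]$ reproduces \eqref{EQ:AVERAGE_ENERGY_HARVESTED_SECOND_PHASE} exactly; with this replacement your plan is a complete proof.
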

\begin{proof}
	The proof follows similar steps as in \cite{Demir2021a}, and the detailed proof is omitted here due to the limitation of space.
\end{proof}
\begin{figure*}
	\begin{align}
		E_{k}^{\text{(d)}} &= \mu\tau_{\text{d}}\left(\sum_{m=1}^{M}\sum_{j=1}^{K}{p_{m,j}^{\text{(d)}}}\gamma_{m,j}w_{m,k} + \sum_{j=1}^{K}\eta_{j}\ddot{w}_{k,j} \right.\nonumber\\
		&\hspace{15ex}\left.+\left(\tau_{\text{p}}\rho_{\text{p}}\right)^2\sum_{m=1}^{M}\sum_{j\in\stP_{k}}p_{m,j}^{\text{(d)}}\left(2\beta_{m,k}\Real{\Conj{\bar{h}}_{m,k}\psi_{m,j}^{-1}w_{m,j}\bar{h}_{m,k}w_{m,j}\psi_{m,j}^{-1}} + \beta_{m,k}^{2}\left|\psi_{m,j}^{-1}w_{m,j}\right|^{2}\right)\right) \label{EQ:AVERAGE_ENERGY_HARVESTED_SECOND_PHASE}
	\end{align}
\end{figure*}

\subsection{Uplink Data Transmission in the Second Phase}
In this phase, we also assume that all $K$ users simultaneously send their data and receive energy from the \acp{AP}. Thus, the received signal at the $m$-th \ac{AP} $y_{m}^{(\text{u})} \in \bbC$ is modeled as
\begin{equation}
	y_{m}^{(\text{u})} = \sum_{k=1}^{K}  \sqrt{\eta_{k}}g_{m,k}x_{k} + \sum_{q=1}^{M}\sum_{k=1}^{K} \sqrt{p_{q,k}^{(\text{u})}\sigma_{m}^{\text{RSI}}}\dddot{g}_{m,q}\hat{g}_{q,k}s_{q,k} + n_{m}^{(\text{u})} \, ,
	\label{EQ:DATA_SIGNAL}
\end{equation}
where $x_{k}\in\bbC$, with $\bbE\{|x_{k}|^2\} = 1$, is the transmitted data symbol
by user $k$, $\sigma_{m}^{\text{RSI}}\in \left[0, 1\right)$ is the residual \ac{SI} and inter-\ac{AP} interference\footnote{Similar to \cite{Vu2019}, we assume that the residual \ac{SI} inter-\ac{AP} interference can be modeled in the same manner as for the \ac{SI}. Indeed, the \ac{IAI} can be seen as the virtual residual \ac{SI} of a virtual ``large" base station where all transmit and receive antennas are collocated due to the centralized processing in the \ac{CPU}. {Moreover, there are practical mechanisms that allow us to have residual \ac{SI} in the order of -110~dB\cite{Goyal2015,Bharadia2013,Nawaz2021,Ayati2021}, which practically enables \ac{FD} implementations.}} suppression level after all real-time cancellations in analog-digital domains~\cite{Nguyen2018a,Braga2021a}, $n_{m}^{(\text{u})}\sim\stC\stN(0,\sigma^{2})$ is the noise
on the received data signal, $\eta_{k}$ is the transmit power of the data symbol and $p_{q, k}^{\text{(u)}}$ is the power allocated by \ac{AP} $q$ to transmit to user $k$ in the second phase.
{The second term is related to the residual \ac{SI} at \ac{AP}~$m$.}

Also, each \ac{AP} is able to perform local data detection that are passed to the \ac{CPU} for final decoding. By employing the phase-unaware \ac{LMMSE} channel estimator in \eqref{EQ:ESTIMATED_CHANNEL} for \ac{MRC} {coherent} detection, the local estimate of $x_k$ at \ac{AP} $m$ is given by:
%
%{\it GF Since we assume MRC, we should clearly state that we consider coherence reception at the $M$ base stations ?}
%
\begin{align}
	\tilde{x}_{m,k} &= \sum_{j=1}^{K}  \sqrt{\eta_{j}}\Conj{\hat{g}}_{m,k}g_{m,j}x_{j}\nonumber\\
	&\hspace{2ex} + \sum_{q=1}^{M}\sum_{j=1}^{K} \sqrt{p_{q,j}^{(\text{p})}\sigma_{m}^{\text{RSI}}}\Conj{\hat{g}}_{m,k}{\dddot{g}}_{m,q}\hat{g}_{q,j}s_{q,j}+ n_{m}^{(\text{u})}.
\end{align}

Once the local estimates are sent to the \ac{CPU}, they are multiplied by a receive filter coefficient $\alpha_{m,k}$, i.e.,
\begin{equation}
	\hat{x}_{k} = \sum_{m\in\stM_{k}} \alpha_{m,k} \tilde{x}_{m,k}.
\end{equation}
\makeatletter
Note that the receive filter coefficients can be optimized to maximize the \ac{SE} using only channel statistics since the \ac{CPU} does not know the channel estimates.

In addition, let us define the following vectors and matrices for ease of
notation:
\begin{align*}
	&\bm{\alpha}_{k} \triangleq \Transp{\left[\alpha_{1,k}, \ldots, \alpha_{M,k}\right]}\in\bbC^{M\times 1} , \\
	&\vtB_{k} \triangleq \Transp{\left[b_{1,k}, \ldots, b_{M,k}\right]}\in\bbC^{M\times 1},\quad  b_{m, k} \triangleq \bbE\left[\Conj{\hat{g}}_{m,k}g_{m,k}\right] , \\
	&\mtC_{k,j}\in\bbC^{M\times M},\quad c_{k,j}^{m, m'}\triangleq \bbE\left[\Conj{\hat{g}}_{m,k}g_{m,j}\Conj{g}_{m',j}\hat{g}_{m',k}\right] ,\\
	&\mtD_{k} \in\bbC^{M\times M}, \quad d_{m,k} \triangleq \bbE\left[\Conj{\hat{g}}_{m,k}n_{m}^{\text{(u)}}\Conj{\left(n_{m}^{\text{(u)}}\right)}\hat{g}_{m,k}\right] ,\\
	&\mtF_{k,q,j} \in\bbC^{M\times M},\\
	&f_{k,j}^{m, m'} \triangleq \bbE\left[\Conj{\hat{g}}_{m,k}\dddot{g}_{m,q}\hat{g}_{q,j}\sqrt{\sigma_{m}^{\text{RSI}}}\sqrt{\sigma_{m'}^{\text{RSI}}}\Conj{\hat{g}}_{q,j}\Conj{\dddot{g}}_{m',q}\hat{g}_{m',k}\right],
\end{align*}
where $c_{k,j}^{m, m'}$ and $f_{k,j}^{m, m'}$ are the elements of the line $m$ and column $m'$ of the matrices $\mtC_{k,j}$ and $\mtF_{k,q,j}$, respectively, and $\mtD_{k}$ is a diagonal matrix with the $m$-th diagonal element being $d_{m,k}$. Thus, the \ac{SE} can be computed using proposition \ref{PROP:SE}.
\begin{prop}
	Assuming the phase-unaware \ac{LMMSE} channel estimator in \eqref{EQ:ESTIMATED_CHANNEL}, the closed-form expression for the achievable \ac{SE} of user $k$ using \ac{MRC} at the \acp{AP} when simultaneous data transmission and \ac{WPT} is performed in \ac{CF} systems is
	\begin{align}
		R_{k} = \tau_{\text{u}}/\tau_{\text{c}}\log_{2}(1 + \Gamma_{k}),
	\end{align}
	where $\Gamma_{k}$ is the effective \ac{SINR} shown in \eqref{EQ:EFFECTIVE_SINR} at the top of the next page. {The term $f_{k,q, j}^{m, m'}$ is not present in~\cite{Demir2021a} and arises in our analyses due to the residual \ac{SI} at the \acp{AP}.}
	\label{PROP:SE}
\end{prop}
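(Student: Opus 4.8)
The plan is to invoke the standard \emph{use-and-then-forget} (UatF) capacity bound, exactly as done for the coherent case in~\cite{Demir2021a}, and then to evaluate the required first- and second-order moments in closed form. First I would stack the local estimates into $\tilde{\vtX}_{k}=\Transp{[\tilde{x}_{1,k},\ldots,\tilde{x}_{M,k}]}$ so that $\hat{x}_{k}=\Transp{\bm{\alpha}_{k}}\tilde{\vtX}_{k}$, and isolate the contribution of the desired symbol $x_{k}$. Adding and subtracting its mean, $\hat{x}_{k}$ is written as $\sqrt{\eta_{k}}\,(\Transp{\bm{\alpha}_{k}}\vtB_{k})\,x_{k}$ --- a deterministic gain that the \ac{CPU} knows, since $\bm{\alpha}_{k}$ depends only on channel statistics --- plus an aggregate term collecting the gain fluctuation, the multiuser interference from $\{x_{j}\}_{j\neq k}$, the residual \ac{SI}/\ac{IAI}, and the thermal noise. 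Because all these leftover terms are zero-mean and uncorrelated with $x_{k}$, treating them as worst-case Gaussian effective noise yields the achievable (lower-bounding) rate $R_{k}=(\tau_{\text{u}}/\tau_{\text{c}})\LogTwo(1+\Gamma_{k})$, with $\Gamma_{k}$ the ratio of desired-signal power to aggregate effective-noise power.

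Next I would match each power term to the quadratic forms already defined. The numerator is the mean desired power $\eta_{k}\MOD{\Transp{\bm{\alpha}_{k}}\vtB_{k}}^{2}$. The denominator is assembled as the total signal-plus-interference power $\sum_{j}\eta_{j}\,\Transp{\bm{\alpha}_{k}}\mtC_{k,j}\Conj{\bm{\alpha}_{k}}$ (summed over all $j$, including $j=k$), minus the coherent part $\eta_{k}\MOD{\Transp{\bm{\alpha}_{k}}\vtB_{k}}^{2}$ already counted in the numerator, plus the residual \ac{SI}/\ac{IAI} power $\sum_{q}\sum_{j}p_{q,j}^{(\text{u})}\,\Transp{\bm{\alpha}_{k}}\mtF_{k,q,j}\Conj{\bm{\alpha}_{k}}$ and the noise power $\Transp{\bm{\alpha}_{k}}\mtD_{k}\Conj{\bm{\alpha}_{k}}$. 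This identification is purely algebraic once one recognizes $c_{k,j}^{m,m'}$, $f_{k,q,j}^{m,m'}$ and $d_{m,k}$ as the second-order moments produced by expanding the relevant $\bbE[\MOD{\cdot}^{2}]$.

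The technical core is evaluating these moments in closed form for the spatially uncorrelated Rician channels with unknown, uniformly distributed \ac{LOS} phases. For the numerator, the independence of $\hat{g}_{m,k}$ and $\tilde{g}_{m,k}$ gives $b_{m,k}=\gamma_{m,k}$, while the noise term reduces immediately to $d_{m,k}=\sigma^{2}\gamma_{m,k}$. For $c_{k,j}^{m,m'}$ I would split the fourth-order expectation according to whether $j\in\stP_{k}$ --- the pilot-sharing users, whose \ac{LMMSE} estimates are correlated with $\hat{g}_{m,k}$ through pilot contamination --- or $j\notin\stP_{k}$, and then average over the uniform \ac{LOS} phases, which annihilates every cross term except those in which the random phase factors cancel. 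The surviving contributions recover the $\gamma$, $w$ and $\beta$ dependencies populating $\mtC_{k,j}$.

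The main obstacle I anticipate is the new term $\mtF_{k,q,j}$, absent from~\cite{Demir2021a}, which filters the residual \ac{SI}/\ac{IAI} through a product of three independent channel families: the estimated user channel $\hat{g}_{q,j}$, the \ac{AP}-to-\ac{AP} channel $\dddot{g}_{m,q}$, and the combining channel $\hat{g}_{m,k}$. Computing $f_{k,q,j}^{m,m'}$ demands careful bookkeeping of which index coincidences create nonzero correlations --- notably the self-interference case $q=m$ and the pilot-sharing case $j\in\stP_{k}$, each contributing extra terms --- combined once more with the \ac{LOS}-phase averaging. Once these moments are assembled into $\mtC_{k,j}$ and $\mtF_{k,q,j}$, substituting back into the ratio gives the claimed closed-form \ac{SINR} in~\eqref{EQ:EFFECTIVE_SINR}, completing the proof.
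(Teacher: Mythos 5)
Your proposal is correct and follows essentially the same route as the paper: the paper's proof is a one-line appeal to the worst-case Gaussian noise (use-and-then-forget) bound of \cite{Ozdogan2019, Braga2022a, Antonioli2022}, which is exactly the decomposition you spell out, followed by the closed-form moment evaluation (including the new $\mtF_{k,q,j}$ term from the residual \ac{SI}/\ac{IAI}) that the paper omits for space. Your identifications $b_{m,k}=\gamma_{m,k}$ and $d_{m,k}=\sigma^{2}\gamma_{m,k}$ match the paper's stated closed forms, and your use of $\Transp{\bm{\alpha}_{k}}(\cdot)\Conj{\bm{\alpha}_{k}}$ versus the paper's $\Herm{\bm{\alpha}}_{k}(\cdot)\bm{\alpha}_{k}$ is immaterial since the moment matrices are Hermitian.
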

\begin{proof}
	Based on \cite{Ozdogan2019, Braga2022a, Antonioli2022} where all additive interference is treated as a worst-case Gaussian noise we can arrive at the result. The detailed proof is omitted due to lack of space.
\end{proof}
\begin{figure*}
	\begin{align}
		&\Gamma_{k} = \frac{\eta_{k}\left|\Herm{\bm{\alpha}}_{k}\vtB_{k}\right|^{2}}{\Herm{\bm{\alpha}}_{k}\left(\sum\limits_{j=1}^{K}\eta_{j}\mtC_{k,j}\right)\bm{\alpha}_{k} - \eta_{k}\left|\Herm{\bm{\alpha}}_{k}\vtB_{k}\right|^2 + \Herm{\bm{\alpha}}_{k}\left(\sum\limits_{q=1}^{M}\sum\limits_{j=1}^{K}p_{q,j}^{\text{d}}\mtF_{k,q,j}\right)\bm{\alpha}_{k}+ \Herm{\bm{\alpha}}_{k}\mtD_{k}\bm{\alpha}},   \label{EQ:EFFECTIVE_SINR}\\
		&\mbox{where} \nonumber\\
		&b_{k}^{m} = \tau_{\text{d}}\rho_{\text{p}}\Conj{\bar{h}}_{m,k}\psi_{m,k}^{-1}w_{m,k}\bar{h}_{m,k} + \tau_{\text{d}}\rho_{\text{p}}\beta_{m,k}\psi_{m,k}^{-1}w_{m,k}, \qquad d_{k}^{m,m'} =\left\{\begin{matrix}
			\sigma^2\gamma_{m,k},&  m = m'\\
			0,& m\neq m'
		\end{matrix}\right.\nonumber\\% \label{EQ:ELEMENTS_OF_B}\\
		&c_{k,j}^{m,m'} = \left\{\begin{matrix}
			\gamma_{m,k}w_{m,j} + \tau_{\text{d}}^2\rho_{\text{p}}^2\left(2\beta_{m,j}\Real{\Conj{\bar{h}}_{m,j}\psi_{m,k}^{-1}w_{m,k}\bar{h}_{m,j}w_{m,k}\psi_{m,k}^{-1}} +\beta_{m,j}^2\left|w_{m,k}\psi_{m,k}^{-1}\right|^2\right)\left|\Herm{\bm{\phi}}_{j}\bm{\phi}_{k}\right|^2,& m = m' \\
			\tau_{\text{d}}^2\rho_{\text{p}}^2\left[\left(\Conj{\bar{h}}_{m,j}\psi_{m,k}^{-1}w_{m,k}\bar{h}_{m,j} + \beta_{m,j}\psi_{m,k}^{-1}w_{m,k}\right) \left(\Conj{\bar{h}}_{m',j}\psi_{m',k}^{-1}w_{m',k}\bar{h}_{m',j} + \beta_{m',j}\psi_{m',k}^{-1}w_{m',k}\right)\right]\left|\Herm{\bm{\phi}}_{j}\bm{\phi}_{k}\right|^2,& m \neq m'
		\end{matrix}\right. \nonumber\\%\label{EQ:ELEMENTS_OF_C}\\
		&f_{k,q, j}^{m, m'} =\left\{\begin{matrix}
			\dddot{w}_{m,q}\sigma_{m}^{\text{RSI}}\left[\tau_{\text{d}}^2\rho_{\text{p}}^2\left(2\beta_{q,j}\Real{\Conj{\bar{h}_{q,j}}w_{m,k}\psi_{m,k}^{-1}\bar{h}_{q,j}\psi_{m,k}^{-1}w_{m,k}} + \beta_{q,j}^2\left|w_{m,k}\psi_{m,k}^{-1}\right|^2\right) + \gamma_{m,k}c_{q,j}\right]\left|\Herm{\bm{\phi}}_{j}\bm{\phi}_{k}\right|^2 \\ +\dddot{w}_{m,q}\sigma_{m}^{\text{RSI}}\gamma_{m,k}\gamma_{q,j},&\hspace{-11ex} q = m \mbox{ and }  m = m'\\
			\dddot{w}_{m,q}\gamma_{m,k}\sigma_{m}^{\text{RSI}}\gamma_{q,j},&\hspace{-11ex} q \neq m \mbox{ and }  m = m'\\
			0,& m\neq m'
		\end{matrix}\right. \nonumber%\label{EQ:ELEMENTS_OF_F}\\
		\end{align}
\end{figure*}

\section{Problem Formulation}
\label{SEC:PROBLEM_FORMULATION}
In this paper we consider the problem of minimizing the users' battery energy usage while satisfying minimum data rate requirements. We assume that the users are powered over the air by the \acp{AP} and have a battery.
The battery energy is required to compensate for the randomness and intermittence of the harvested energy, increasing the feasibility of the problem (i.e., serving all users with their required minimum rate).
Then, considering $\eta_{k} = \eta_{k}^{\text{(e)}} + \eta_{k}^{\text{(b)}}$, where $\eta_{k}^{\text{(e)}}$ and $\eta_{k}^{\text{(b)}}$ denote the power drawn from the harvesting and the battery, respectively, the problem can be formulated as
%
%
%Hence, the consumed energy of user $k$ can be given by
%\begin{equation}
%	E_{\text{c}} = \tau_{\text{p}}p_{\text{p}} + \tau_{\text{d}}p_{\text{p}} + \tau_{\text{u}}\eta_{k}^{\text{(e)}} + \tau_{\text{u}}\eta_{k}^{\text{(b)}}
%\end{equation}
\begin{subequations}
	\begin{align}
		\underset{p_{m,k}^{\text{(d)}},\eta_{k}^{\text{(e)}},\eta_{k}^{\text{(b)}},\bm{\alpha_{k}}}{\text{minimize}}& \sum_{k}^{K} \tau_{\text{u}}\eta_{k}^{\text{(b)}} \label{EQ:PROBLEM_OBJ}&	\\
		\mbox{subject to } &R_{k} \geq R_{k}^{\text{th}}, \label{EQ:PROBLEM_CONS1}& \forall k, \\
		&\tau_{\text{u}}\eta_{k}^{\text{(e)}} \leq E_{k}^{\text{(d)}}, \; \tau_{\text{u}}\eta_{k}^{\text{(b)}} \leq E_{k}^{\max}, \label{EQ:PROBLEM_CONS2}&\forall k, \\
%		&\tau_{\text{u}}\eta_{k}^{\text{(b)}} \leq E_{k}^{\max}, \label{EQ:PROBLEM_CONS3}&\quad \forall k,\\
		&\sum_{k=1}^{K}p_{m,k}^{\text{(d)}}\gamma_{m,k} \leq P_{m}^{\max}, \label{EQ:PROBLEM_CONS4}& \forall m,		
	\end{align}
	\label{EQ:PROBLEM}
\end{subequations}
\hspace{-1ex}where $E_{k}^{\max}$ is the maximum available energy at the battery of user $k$ and $R_{k}^{\text{th}}$ is the minimum data rate demand of user $k$.

\section{Proposed Solution}
\label{SEC:PROPOSED_SOLUTION}
Note that problem \eqref{EQ:PROBLEM} is non-convex, which makes the global optimal solution computationally difficult to be obtained. Indeed, the non-convexity arises due to constraints \eqref{EQ:PROBLEM_CONS1} and the highly coupled variables. To deal with this issue, note that the data rate constraints \eqref{EQ:PROBLEM_CONS1} can be replaced by \ac{SINR} constraints as follows
\begin{equation}
	\Gamma_{k} \geq \Gamma_{k}^{\text{th}}, \quad \forall k, \label{EQ:SINR_CONSTRAINTS}
\end{equation}
where $ \Gamma_{k}^{\text{th}} = 2^{\frac{\tau_{\text{c}}R_{k}^{\text{th}}}{\tau_{\text{u}}}} - 1 $.
Moreover, to deal with the highly coupled variables, we can resort to the alternating optimization method, such as in \cite{Demir2021a}.
Note that $E_{k}^{\text{(d)}}$ as well as the numerator and denominator of \ac{SINR} expressions are linear in $\{p_{m,k}^{\text{(d)}}\}_{\forall m,k}$ and $\{\eta_{k}\}_{\forall k}$, given the receiver filter coefficient vector $\{\bm{\alpha}_{k}\}_{\forall k}$.
Hence, for some given $\bm{\alpha}_{k}$, the optimization problem can be shown to be quasilinear and its global optimum solution can be found using standard solvers such as CVX~\cite{cvx}.
In addition, the receiver filter coefficients $\bm{\alpha}_{k}$ only affect the \ac{SINR} of the user $k$ and can be found in closed form for the given uplink data power $\{\eta_{k}\}_{\forall k}$ by maximizing a generalized Rayleigh quotient~\cite{Demir2021a}.
The complete alternating optimization algorithm can be seen in Algorithm~\ref{ALG:PROPOSED}.
\begin{algorithm}[!thb]
	\small
	\caption{Proposed solution.}\label{ALG:PROPOSED}
	\begin{algorithmic}[1]
		\STATE Initialize $\{\bm{\alpha}_{k}\}_{\forall k}$ as all ones vector and set $l = 0$;
		\REPEAT
		\STATE Replace constraints \eqref{EQ:PROBLEM_CONS1} by \eqref{EQ:SINR_CONSTRAINTS};
		\STATE Solve problem \eqref{EQ:PROBLEM} by taking $\{\bm{\alpha}_{k}\}_{\forall k}$ as constants;
		\STATE Obtain the optimal $\{\bm{\alpha}_{k}\}_{\forall k}$ by maximizing a generalized Rayleigh quotient;
		\STATE Set $l \leftarrow l + 1$;
		\UNTIL Convergence has been reached or $l > L_{\max}$.
	\end{algorithmic}
\end{algorithm}

\begin{prop}
	The objective function of the original problem  \eqref{EQ:PROBLEM} converges by performing algorithm~\ref{ALG:PROPOSED}. \label{PROP:CONVERGENCE}
\end{prop}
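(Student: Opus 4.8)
The plan is to establish convergence through the monotone convergence theorem: I would show that the sequence of objective values generated by Algorithm~\ref{ALG:PROPOSED} is monotonically non-increasing and bounded below, and therefore convergent. Let $f^{(l)} \triangleq \sum_{k=1}^{K}\tau_{\text{u}}\eta_{k}^{(\text{b}),(l)}$ denote the objective value of \eqref{EQ:PROBLEM_OBJ} at the end of line~4 of iteration $l$, and let $A^{(l)}$ and $P^{(l)}$ collect, respectively, the receive filters $\{\bm{\alpha}_{k}\}$ after line~5 and the power variables $\{p_{m,k}^{(\text{d})}, \eta_{k}^{(\text{e})}, \eta_{k}^{(\text{b})}\}$ after line~4 of that iteration.

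First I would analyze the filter-update step (line~5). The key observation is that $\bm{\alpha}_{k}$ enters only the \ac{SINR} $\Gamma_{k}$ of \eqref{EQ:EFFECTIVE_SINR}, and appears neither in the objective \eqref{EQ:PROBLEM_OBJ}, nor in the energy and battery constraints \eqref{EQ:PROBLEM_CONS2}, nor in the per-\ac{AP} power constraint \eqref{EQ:PROBLEM_CONS4}. Since line~5 chooses $\bm{\alpha}_{k}$ to maximize the generalized Rayleigh quotient $\Gamma_{k}$ for the current powers, it follows that $\Gamma_{k}(P^{(l)}, A^{(l)}) \geq \Gamma_{k}(P^{(l)}, A^{(l-1)}) \geq \Gamma_{k}^{\text{th}}$, so the \ac{SINR} constraints \eqref{EQ:SINR_CONSTRAINTS} remain satisfied; all other constraints are untouched because they do not involve $\bm{\alpha}_{k}$. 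Hence $P^{(l)}$ is still feasible for problem \eqref{EQ:PROBLEM} with the updated filters $A^{(l)}$, and the objective value is unchanged because it does not depend on $\bm{\alpha}_{k}$.

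Next I would treat the power-update step of iteration $l+1$ (line~4). With filters frozen at $A^{(l)}$, the problem is quasilinear and is solved to global optimality, as noted in Section~\ref{SEC:PROPOSED_SOLUTION}. By the preceding argument, $P^{(l)}$ is a feasible point of this subproblem and attains objective value $f^{(l)}$; the global optimum therefore cannot exceed it, giving $f^{(l+1)} \leq f^{(l)}$ and establishing that $\{f^{(l)}\}$ is non-increasing. Boundedness below is immediate, since feasibility of \eqref{EQ:PROBLEM_CONS2} together with the physical nonnegativity of the battery power gives $\eta_{k}^{(\text{b})} \geq 0$, so $f^{(l)} \geq 0$ for every $l$. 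A non-increasing sequence bounded below converges, which proves the proposition. The step deserving the most care — and the main obstacle — is the feasibility-preservation claim across the filter update: one must verify rigorously that $\bm{\alpha}_{k}$ is absent from every constraint except the \ac{SINR} one and that maximizing $\Gamma_{k}$ can only relax that constraint, since otherwise $P^{(l)}$ need not be feasible for the next subproblem and the monotonicity chain $f^{(l+1)} \leq f^{(l)}$ would fail.
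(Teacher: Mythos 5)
Your proof is correct and follows essentially the same route as the paper's own argument: monotone non-increase of the objective under alternating optimization (the filter update leaves the objective and all non-SINR constraints untouched while only improving $\Gamma_{k}$, so the previous powers stay feasible, and the globally solved quasilinear power subproblem then cannot increase the objective), combined with boundedness below to conclude convergence. If anything, your write-up is more rigorous than the paper's, which asserts that ``less (or equal) transmission power will be needed at the next iteration'' without spelling out the feasibility-preservation step across the filter update that you correctly identify as the crux.
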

\begin{proof}
	Note that the power values remain constant at each receiver filter coefficients update in line 5.
	Moreover, for fixed power values, the receiver filter coefficients optimization achieves an uplink data rate greater than or equal to that of the previous iteration.
	Then, less (or equal) transmission power will be needed at the next iteration.
	Hence, the objective function is monotonically decreasing with each iteration.
	Since the objective is bounded by the power and rate constraints, we can claim the convergence of~\eqref{EQ:PROBLEM_OBJ} with algorithm~\ref{ALG:PROPOSED}.
\end{proof}

Proposition~\ref{PROP:CONVERGENCE} shows the convergence of the objective function to a limit point of a monotonically decreasing sequence.
Since the original problem is non-convex, global optimality of the provided solution cannot be guaranteed.
Thus, the proof of optimality of the proposed method is an open research topic and left for future works.
Nevertheless, the simulation results in Section~\ref{SEC:NUMERICAL_RESULTS} demonstrate
that Algorithm~\ref{ALG:PROPOSED} provides significant performance gains
compared to the state-of-the-art solutions.

Furthermore, note that this method consists in a centralized
approach in which the CPU is responsible for all computations. Then, the CPU informs the values of $\{\eta_{k}^{(\text{(b)})}, \eta_{k}^{(\text{(e)})}\}_{\forall k}$ to the users. Finally, the per-iteration computational complexity of Algorithm~\ref{ALG:PROPOSED} is dictated by solving the linear problem in line 4, which has a complexity equivalent to $\stO\left((MK + 2K)^{3.5}\right)$.

\section{Numerical Results}
\label{SEC:NUMERICAL_RESULTS}

We consider the uplink of a \ac{CF} system with parameters $\{M, K, \tau_{\text{p}}, \tau_{\text{c}}\} = \{64, 4, 2, 200\}$.
We adopt the \ac{3GPP} \ac{InH} model in \cite{Demir2021a} with a 3.4 GHz carrier frequency and 20 MHz of bandwidth.
The large-scale fading coefficients, shadowing parameters, probability of LOS, and the Rician factors are simulated based on \cite{Demir2021a}.
%\footnote{The correlations between shadowing, terminal positions and Rician factors in \cite{3GPPTR36814} are neglected for simplicity. Moreover, the path loss of \ac{SI} channels are set to -15 dB as in \cite{Zeng2015}.}.
The \acp{AP} and users are uniformly distributed within a square of 100~m $\times$ 100~m and a 4~m height difference between \acp{AP} and users is taken into account when calculating distances.
A wrap-around technique is applied to imitate a network with an infinite area.
Moreover, each user randomly selects a pilot from a predefined set of orthogonal pilots.
We set $E_{k}^{\max} = 0.2$~J and $P_{m}^{\max}$ = 1~W.
The pilot power is fixed and set to 0.1~W.
The noise power is given by $\sigma^2 = -96$~dBm.
The energy harvesting efficiency of the rectifier circuit, $\mu$, is 0.5.
Finally, for comparison, we consider the solution proposed in \cite{Demir2021a} adapted to our problem in which the \ac{TS} protocol is adopted, i.e., it dedicates $\tau_{\text{d}}$ samples of the coherent interval for energy harvesting (thus, $\tau_{\text{c}} = \tau_{\text{p}} + \tau_{\text{d}} + \tau_{\text{u}}$).
We denote this solution as \ac{TS} in the plots.
Also, numerical results are obtained by 500 random realizations of \acp{AP} and users locations.

In Fig.~\ref{FIG:Convergence} we present the convergence of proposed algorithm with different values of $\mu$.
We can see that the proposed solutions converges to the final solution with a low number of iterations.
{The key insight taken from Fig.~\ref{FIG:Convergence} is that the} objective function improves as the energy harvesting efficiency of the rectifier circuit increases since more energy is being harvested, which decreases the energy usage from users' batteries.
Also, we observe that the number of iterations needed for convergence increases as $\mu$ increases.

\begin{figure}[!t]
	\centering
	\includegraphics[width=0.9\columnwidth]{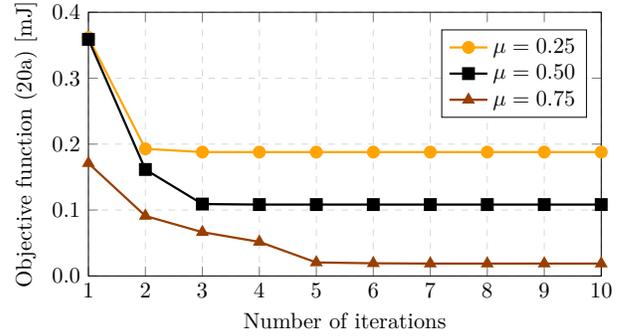}
	\caption{Convergence of proposed algorithm with $\sigma_{m}^{\text{RSI}} = -90$ dB and $R_{k}^{\text{th}} = 2.5$ bit/s/Hz.}
	\label{FIG:Convergence}
\end{figure}

Since the proposed solution allows users to transmit data while simultaneously receiving energy from the \acp{AP}, it is affected by the residual \ac{SI} at the \acp{AP}.
Thus, Fig.~\ref{FIG:RSI} presents the fraction of energy from the battery versus the residual \ac{SI}.
The fraction of energy from the battery is given by $\tau_{\text{u}}\eta_{k}^{\text{(b)}}/(\tau_{\text{u}}\eta_{k}^{\text{(b)}} + \tau_{\text{u}}\eta_{k}^{\text{(e)}})$.
As we can see, the \ac{TS} solution presents the same performance for all values of $\sigma_{m}^{\text{RSI}}$ since it is not affected by the \ac{SI}.
The proposed solution, in its turn, increases the fraction of energy used from the battery as the residual \ac{SI} increases.
The reason behind this is that to avoid \ac{SI}, the \acp{AP} allocate less power to the users, which affects the energy harvesting.
Thus, to fulfill the minimum data rates more energy is obtained from the battery.
Even so, {it is worth highlighting that} the proposed solution is able to outperform the \ac{TS} solution when the residual \ac{SI} is less than -95 dB{, which is a practical assumption considering existing \ac{SI} canceling mechanisms~\cite{Goyal2015,Bharadia2013,Nawaz2021,Ayati2021}, which allow us to have residual \ac{SI} in the order of -110~dB.}

\begin{figure}[!h]
	\centering
	\includegraphics[width=0.9\columnwidth]{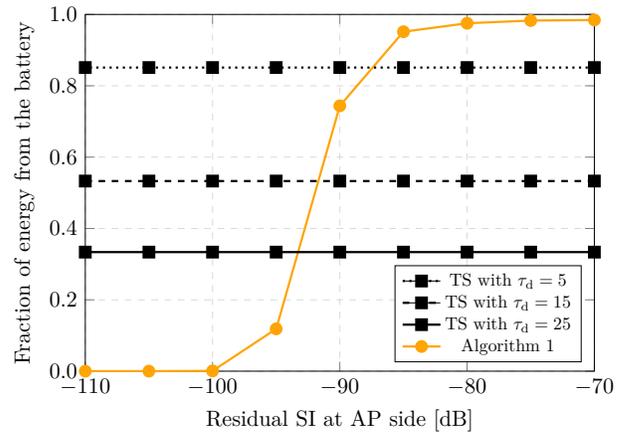}
	\caption{Fraction of energy from the battery versus residual SI ($\sigma_{m}^{\text{RSI}}$) with $R_{k}^{\text{th}} = 2.5$ bit/s/Hz.}
	\label{FIG:RSI}
\end{figure}

{Another aspect that should be highlighted is that while \ac{SI} energy-harvesting increases the harvested energy, the residual \ac{SI} affects the data decoding. 
To analyze this effect, Fig.~\ref{FIG:SE_vs_RSI} presents per-user \ac{SE} versus the residual \ac{SI} at the \acp{AP}. 
As it can be seen, the \ac{SE} increases as the residual \ac{SI} decreases. 
That is expected because low residual \acp{SI} yield higher data decoding rate at the \acp{AP}, leading to higher \ac{SE}. 
Moreover, the proposed method performs much better than the solution proposed in \cite{Demir2021a} in terms of \ac{SE} for values of residual \ac{SI} lower than -90 dB. 
This occurs because the baseline method harvests less energy, and - to minimize the battery consumption - it satisfies, but does not exceed, the data rate requirement. On the other hand, for lower and practical values of residual SI, the proposed method is able to harvest sufficient energy for data transmission without affecting the battery consumption. 
Since all harvested energy is used for data transmission, the user achieves data rates higher than the data requirements.}

\begin{figure}[!tb]
	\centering
	\includegraphics[width=0.9\columnwidth]{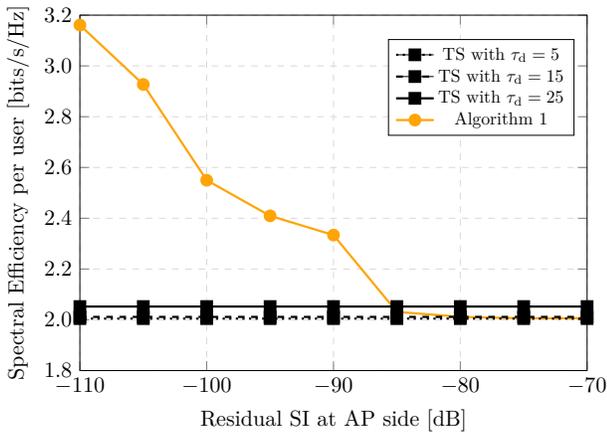}
	\caption{\ac{SE} per user versus the residual SI at the \acp{AP} ($\sigma_{m}^{\text{RSI}}$) with $R_{k}^{\text{th}}$ = 2.0 bits/s/Hz .}
	\label{FIG:SE_vs_RSI}
\end{figure}

Last but not least, in Fig.~\ref{FIG:OUTAGE} we present the outage rate versus the minimum data rate requirements.
We say that an outage event occurs when an algorithm is not able to find a feasible solution that fulfills all constrains of problem \eqref{EQ:PROBLEM} for a given realization of \acp{AP} and users' locations.
As expected, the outage rate increases as the minimum data rate requirements increases for all solutions.
Indeed, the higher the minimum data rate requirements, the harder it is to solve the problem.
The \ac{TS} solution degrades its performance as the number of dedicated samples for energy harvesting increases.
This occurs because less samples are dedicated to data transmission, which affects the \ac{SE} of the users.
Thus, the \ac{TS} solution performs a trade-off between minimizing the battery usage and fulfilling the rate demands.
{Meanwhile, we highlight that} the proposed solution achieves the best performance in terms of outage {with} no trade-off required.

\begin{figure}[!h]
	\centering
	\includegraphics[width=0.95\columnwidth]{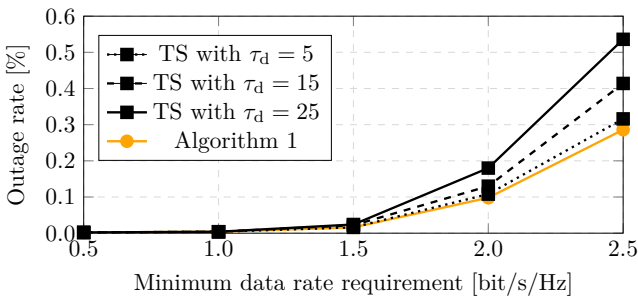}
	\caption{Outage rate versus minimum data rate requirements with $\sigma_{m}^{\text{RSI}} = -100$ dB and $R_{k}^{\text{th}} = 2.5$ bit/s/Hz.}
	\label{FIG:OUTAGE}
\end{figure}

\section{Conclusions}
\label{SEC:CONCLUSIONS}

In this paper, we studied wireless-powered \ac{CF} systems in which users transmit data signals to the \acp{AP} while simultaneously harvesting energy from the network nodes, including itself, by performing self-energy recycling.
{Due to this new feature of self-energy recycling, new terms arise in the derived equations, which have not been previously presented in the literature.}
%Closed-form lower bounds for the amount of harvested energy and \ac{SINR} expressions were derived, assuming uncorrelated Rician fading channel with unknown phase shifts and linear non-coherent energy harvesting.
The problem of minimizing the users' battery energy usage with minimum data rate constraints was formulated, for which an alternating optimization algorithm was proposed along with its proof of convergence.
Simulation results showed that the proposed method outperforms a state-of-art algorithm in terms of battery energy usage and outage rate.
Future works could consider coherent energy transmission and {non-linear energy harvesting}.

%\section*{Acknowledgment}
%This work was supported in part by Ericsson Research, Technical Cooperation Contract UFC.48, in part by the Brazilian National Council for Scientific and Technological Development (CNPq), in part by FUNCAP, in part by CAPES/PRINT Grant 88887.311965/2018-00 and in part by CAPES - Finance Code 001.

\bibliographystyle{IEEEtran}
\bibliography{IEEEabrv,cfrefs}

%\begin{thebibliography}{1}
%% You can use other form of bib file by changing here... 
%
%\bibitem{IEEEhowto:kopka}
%H.~Kopka and P.~W. Daly, \emph{A Guide to \LaTeX}, 3rd~ed.\hskip 1em plus
%  0.5em minus 0.4em\relax Harlow, England: Addison-Wesley, 1999.
%
%\end{thebibliography}

%\begin{IEEEbiography}{Yuguang ``Michael'' Fang}
%Biography text here.
%\end{IEEEbiography}
%
%%It is not necessary to upload the biography when you submit your manuscript.

\end{document}